\newtheorem{remark}{Remark}
\newtheorem*{probdef*}{Problem Definition}
\newtheorem{claim}{Claim}
\DeclareSymbolFont{bbold}{U}{bbold}{m}{n}
\DeclareSymbolFontAlphabet{\mathbbold}{bbold}
\title{Transaction Confirmation in Coded Blockchain}
\author{%
  \IEEEauthorblockN{\textbf{Ilan Tennenhouse}$^*$ and \textbf{Netanel Raviv}$^\ddag$}
  \IEEEauthorblockA{%
    $^*$Blavatnik School of Computer Science, Tel Aviv University, Tel Aviv, Israel\\
    $^\ddag$Department of Computer Science and Engineering, Washington University in St. Louis, St. Louis, MO, USA}
}
\begin{document}
\maketitle
\pagestyle{plain}

\begin{abstract}
    As blockchains continue to seek to scale to a larger number of nodes, the communication complexity of protocols has become a significant priority as the network can quickly become overburdened. Several schemes have attempted to address this, one of which uses coded computation to lighten the load. Here we seek to address one issue with all such coded blockchain schemes known to the authors: transaction confirmation. In a coded blockchain, only the leader has access to the uncoded block, while the nodes receive encoded data that makes it effectively impossible for them to identify which transactions were included in the block. As a result, a Byzantine leader might choose not to notify a sender or receiver of a transaction that the transaction went into the block, and even with an honest leader, they would not be able to produce a proof of a transaction's inclusion. To address this, we have constructed a protocol to send the nodes enough information so that a client sending or receiving a transaction is guaranteed to not only be notified but also to receive a proof of that transaction's inclusion in the block. Crucially, we do this without substantially increasing the bit complexity of the original coded blockchain protocol.
\end{abstract}

\footnotetext[1]{This work was done while the first author was an undergraduate student in the Department of Computer Science and Engineering in Washington University in St. Louis.}

\begin{IEEEkeywords}
Coded computation; Blockchain.
\end{IEEEkeywords}

\section{Introduction}

Blockchains, i.e., append only sequences of blocks of transactions backed by hash chains, have exploded in use over the last decade, both in cryptocurrencies like Bitcoin \cite{Bitcoin} and in enterprise \cite{vmware} for their resistance to tampering. In addition to tampering resistance, blockchains are also lauded for removing the necessity of intermediaries like escrow companies.

Although the terminology differs from one scheme to another, blockchains typically consist of nodes and clients. Each round, the transactions posed by clients are gathered into a block and proposed by one of the nodes to the rest of the nodes who then decide on the validity of the block. Blocks can be considered invalid if a transaction in the block double-spends a single coin/transaction, a signature on one of the transactions is faulty, etc. Nodes therefore have stringent requirements posed on them as they must have the ability to detect invalid transactions which often requires storing all or some significant subset of the chain's history. Note the two groups, nodes and clients, are not mutually exclusive and often there are distinctions made between different types of nodes based on the storage/work/communication requirements imposed on them. After a block has been validated by the nodes, it is appended to the chain, and the next round begins.



Traditional blockchains send every transaction to every node. Denoting by $B$ the block of transactions and by $N$ the number of validator nodes (nodes who determine whether or not the block is valid; i.e. no double spending, etc.) in the network, this incurs a communication bit complexity of $\Omega(N|B|)$, which inhibits the system's throughput and does not scale well with the addition of extra nodes.

One attempt to address this issue has been through \textit{sharding} the blockchain~\cite{shard}, which divides the chain to~$K$ distinct parts called shards, and assigns each node to a single shard. By doing so, nodes only have to communicate with other nodes which hold the same shard (known as a committee) and only require the transactions relevant to their shard. This reduces the bit complexity substantially in cases where~$K$ is proportional to~$N$, but the security of the chain is degraded substantially. Malicious nodes need only to reach a majority within a single committee in order to breach the safety of the chain. In addition, sharding creates further complications for validating and appending transactions between different shards, called \textit{cross-shard} transactions. Much work has been done to address the security issue in sharded blockchain, such as random reassignment of shards~\cite{randomShard}. In this paper we focus our attention on a recent idea which takes sharding one step further. 

Coded blockchain, first proposed in \cite{poly}, is a relatively new idea involving encoding the shards and the block to be verified using a certain error correcting code, and replacing the traditional verification process with the computation of a polynomial over the encoded version of the shard and the block. With the aid of recent advances in coded computation~\cite{LCC}, the results of this polynomial on the uncoded shard and block can be retrieved from sufficiently many polynomial results on encoded ones even if some of the results are incorrect, and hence nodes can determine the validity of the block. The central benefits of coded blockchain are improved security over sharding and reduced communication compared to conventional blockchains. First, since each node holds a linear combination (over a finite field) of blocks, no one committee has exclusive control over any shard, thus eliminating the security problem of sharding. Second, by incorporating information dispersal techniques into the decentralized encoding process, the communication bit complexity can be drastically reduced~\cite{coded_journal}. Further, with a verification polynomial of low enough degree, a coded blockchain can accommodate malicious nodes up to a constant fraction of~$N$. Additional benefits include reduced storage and inherent support for cross-shard transactions.


The reduced bit complexity achieved in~\cite{coded_journal} proceeds in the following steps. First, clients are partitioned into~$K$ distinct communities of equal size. Then, rounds begin by selecting a leader for the round who decides which transactions go into the block and then organizes the block so that all transactions from community $i$ to community $j$ lie in the $(i, j)$'th entry in the block when viewing it as a two-dimensional matrix. To accommodate multiple transactions from the $i$'th community to the $j$'th community, the $(i, j)$'th entry is a collection of transactions, called a \textit{tiny block}. The leader then uses the Lagrange polynomial to encode every row of the block into a single row, and sends a different evaluation to each node. A similar process is done for columns of the block. 

Consequently, each node receives a uniquely encoded row that is a combination of all rows of the block, and a similar uniquely encoded column. The coded column will be appended to the node's history of the chain, while the coded row is used in the validation process to verify against the stored (coded) chain. Regardless, a benefit of encoding the block with the Lagrange polynomial is that Lagrange Coded Computation~\cite{LCC} can be used. Nodes can then compute an arbitrary polynomial over the uncoded block by first computing it over their coded rows and coded histories, and then distribute the results to each other; when a node has enough results it can decode the polynomial verification results. The authors then identified a polynomial that is able to validate the block, and devised respective adversary-resilient encoding and decoding mechanisms with low bit complexity.

Gains in terms of bit complexity, however, come at a price. The mechanism which enables~\cite{coded_journal} to reduce bit complexity (with respect to~\cite{poly}, and any other blockchain design known to the authors) is based on the fact the nodes need only to receive a coded row and a coded column, which constitute a $2/K$ fraction of the block. Secure encoding and decoding processes are orchestrated by the round leader, which relieves the nodes from having to perform the encoding themselves (as in~\cite{poly}).
An undesirable consequence is that no node except the leader has individual information about the uncoded block. Nodes only see one coded row and one coded column, and decoding requires as many as~$\Theta(N)$ of those. 
Consequently, the protocol has no way of confirming to a client whether or not the transaction was in the block without incurring bit complexity that would be equivalent to distributing the uncoded block as-is. Doing so would eliminate the very attractive properties of coded blockchain. 

While this problem originates in~\cite{coded_journal}, it is clear it will persist in future coded blockchain systems---reduced bit complexity can be achieved only if encoding is done jointly using a leader, and certainly not if nodes receive all transactions and perform encoding themselves. It is also clear that this problem does not exist in traditional blockchain system, since every node can attest to every transaction in the chain. Therefore, in order to position coded blockchain as a competitive candidate to resolve communication and security issues in future blockchain systems, a \textit{transaction confirmation} protocol must be devised. In what follows we present such a protocol that can be fitted over any coded blockchain in order to allow clients to interact in a meaningful manner with the chain.


\subsection*{Our Contributions}
We begin in Section~\ref{sec:pre} by surveying the cryptographic primitives and the communication model in the system. These definitions are then used to provide a broad structural definition for coded blockchain. In 
Section~\ref{section:protocol} we construct a protocol which ensures that every sender and receiver of a transaction in the block will be notified of the transaction's inclusion in the block. 
Additionally, the client is provided with a proof that the transaction went into the block, which any node in the network as well as the client receiving the proof can validate. 
Importantly, we do so in bit complexity $O(|B|+H+N\operatorname{polylog}(N))$, where the original coded-blockchain protocol verifies a block by communicating~$O(H)$ bits for some function~$H$ of the system parameters; this is no worse than the bit complexity from~\cite{coded_journal}. 

Proofs of correctness and complexity analysis are given in Section~\ref{section:analysis}, and discussion about additional potential benefits and future work is given in Section~\ref{section:discussion}. Our results reveal that coded blockchains can incorporate uncoded processes (i.e. processes relying on nodes other than the leader having knowledge of the uncoded block) into their protocol without incurring significant cost, and that they in fact should seek to do this when possible. 
\section{Preliminaries}\label{sec:pre}
\subsection{Background}\label{section:background}
We assume a decentralized network of nodes, some of which are malicious, that communicate periodically with clients and with one another over rounds. At each round a leader election mechanism is performed, and the elected leader collects transactions from clients, that are to be verified against the existing data in the system. The system operates in the standard \textit{partial synchrony} model~\cite{partial}, in which message delivery is asynchronous until some future point called Global Stabilization Time (GST), after which the system becomes synchronous, i.e., all messages are delivered in no more than~$\delta$ units of time.

The system is capable of performing a \textit{Partially Synchronous Byzantine broadcast} operation as defined in~\cite{good_case}, where a leader node disseminates a message to all other nodes, which guarantees \textit{agreement}, \textit{validity}, and \textit{termination}, though we additionally require \textit{liveness}. The agreement property guarantees that if a broadcast (that is, a message) is accepted by any honest node, then all honest nodes agree on the value of that broadcast. Validity ensures that if the leader is honest and GST has passed, then the value accepted by any honest node is the value the leader sent out. The termination property requires all honest parties to accept a value after GST, while the liveness property further strengthens this to require that the time for a value to be accepted by all honest nodes is bounded after GST. For instance, the protocol HotStuff~\cite{hot} has these properties, and hence we employ it for our communication complexity calculations. In addition, we assume the system has access to the following cryptographic primitives.


\noindent\textit{Digital Signatures~\cite{digital}.} We require that each node $n_i$ has a secret key $sk_i$ and a public key $pk_i$ such that 
a signature $\sigma_{sk_i}(m)$ on any message~$m$ can be created by it.
Additionally, there exists verification function $V$ such that $V(m, \sigma, pk_i) = 1$ if $\sigma = \sigma_{sk_i}(m)$, and given any~$m$ and~$pk_i$, a polynomially-bounded adversary has only negligible probability of finding $\sigma$ such that $V(m, \sigma, pk_i) = 1$ without knowledge of either $sk_i$ or $\sigma_{sk_i}(m)$.

\noindent\textit{Threshold signature~\cite{thresh}.} Each node additionally possess a secret key fragment $skt_i$, as well as a public threshold key~$pkt$ known to all nodes. Each node can then create a partial signature $\sigma_{skt_i}(m)$ over any message~$m$, such that any set $S$ of $t+1$ partial signatures over~$m$ can be combined into a single constant-sized signature $\sigma_{comb}(m, S)$. The signature can be verified with $V_t(m,\sigma, pkt)$, which will output~$1$ if 
there exists a~set $S$ of~$t+1$ partial signatures over $m$ such that $\sigma = \sigma_{comb}(m, S)$, and a polynomially bounded adversary has only a negligible probability of finding~$\sigma$ such that~$V_t(m,\sigma,pkt)=1$ without knowledge of at least~$t+1$ private keys, or the combined signature. For our purposes, we use $t = N/2$. 

\noindent\textit{Polynomial hashing~\cite{hash}.} We require a multi-variate polynomial hash function $HASH$ of low constant degree $d_{HASH}$. This polynomial then allows us to compute the hash of coded data through coded computation (see Section~\ref{subsec:setup}).

\noindent\textit{Vector commitment scheme~\cite{first_vector}.} This scheme includes the three functions $COM$, $PROVE$, and $VER$. $COM(v)$ produces a commitment $C$ to a vector $v$ of $k$ messages $m_1, m_2, ..., m_k$. $PROVE(C, i, m)$ produces a proof $\pi_i$ that~$C$ is a commitment to a vector with $m_i=m$ 
if and only if that was true. $VER(C, m, i, \pi)$ verifies that the proof is valid. The output of both $COM$ and $PROVE$ is of constant size as in \cite{pointproofs}, regardless of the the number of messages~$k$.


\subsection{System Setup} \label{subsec:setup}
\noindent Prior to every round of our protocol, the adversary is able to selectively corrupt and completely coordinate any~$f$ out of the~$N$ nodes as long as~$N\ge3f+1+(K-1)d_{HASH}$. Note the adversary is not adaptive; they can only corrupt immediately prior to a round, which we deem reasonable as no coded blockchains thus far are resilient to an adaptive adversary. Additionally, some coded blockchain systems (e.g.,~\cite{coded_journal}) have more stringent requirements, which subsume~$N\ge3f+1+(K-1)d_{HASH}$.

Blockchain systems which employ coded computation for transaction verification (coded blockchains) are a rather new idea, and the terminology and system structure is still unsettled. To apply our results to the different settings of coded blockchains, and to future implementations thereof, we devise the following canonical definition, which encompasses all coded blockchain systems known to the authors.

\begin{itemize}
    \setlength\itemsep{.5em}
    \item \textit{Block structure.} Each block~$B$ is an ordered list of transactions: $x_1, x_2, ..., x_{g(N)}$, where~$g(N)$ is the number of transactions. A transaction~$x$ contains a sender and receiver, denoted by $SEN(x)$ and $REC(x)$ respectively. Additionally, the block is split into $K=\Theta(N)$ parts $P_1, P_2, \ldots, P_{K}$ over which the encoding is performed. 
    We consider~$g(N)=\Omega(N)$, in contrast to the constant block size assumption in the standard consensus problem~\cite{byzantine} and some current blockchains (Algorand~\cite{Algorand}, Bitcoin~\cite{Bitcoin}, etc.). However, the authors believe that this assumption is compatible with blockchain systems, as it is reasonable to assume that the demand for throughput increases at least linearly with the number of nodes.
    
    \item \textit{Linear encoding.} 
    We use a slightly more general encoding scheme than mentioned for~\cite{coded_journal} to be compatible with coded blockchains that deviate from their scheme.
    There exists a~$k\times n$ generator matrix $G$ such that~$(\Tilde{P}_1,\Tilde{P}_2,\ldots,\Tilde{P}_N)=(P_1,\ldots,P_K)G$.
    Each node~$n_i$ then receives $\Tilde{P_i}$, called coded block, i.e., a linear combination of all the $P_i$'s.
    \item \textit{Coded computation.} 
    Given a polynomial $p$, each node $n_i$ computes $p(\Tilde{P_i})$ and distributes the results to the rest of the network. Upon gathering $R$ results, a node can then perform Reed-Solomon decoding to obtain $p(P_1), \ldots, p(P_K)$ so long as $(K-1)\deg(p)+2f+1\leq R$. Alternatively, nodes do not have to perform decoding themselves; this can be done by a leader, as long as the leader can prove to the nodes that no results have been fabricated, see~\cite{coded_journal}. 
    \item \textit{Leader election mechanism.} The system contains a random leader election mechanism which elects a single leader at each round, and all nodes are aware at all times who the leader is. Additionally, we assume that this process is repeatable in order to guarantee we eventually have an honest leader.
    \item \textit{Permissioned system.} Since we require the current list of all nodes to be known and not to change within an interation of our protocol, the system must be permissioned.
\end{itemize}
Having established proper background and system structure, we are in a position to formally state the problem.


\begin{probdef*} Upon successful termination of the protocol, for all $i \in [g(N)]=\{1,2,\ldots,g(N)\}$, $SEN(x_i)$ and $REC(x_i)$ have been sent a commitment $C=COM(B)$ and a proof $\pi_i=PROVE(C,i,x_i)$ for the inclusion of the transaction~$x_i$ in the block at index~$i$, and a threshold signature~$\sigma_{comb}((C,\pi_i),S)$, where~$S$ is a set of at least~$N/2+1$ partial signatures on~$(C,\pi_i)$. 
\end{probdef*}

\section{Protocol}\label{section:protocol}
Our transaction confirmation protocol relies on adding several mechanisms on top of the coded blockchain protocol. In a nutshell, our protocol begins with electing a random committee of some small size~$\lambda$ (a security parameter), which receives the uncoded block from the leader, alongside a commitment to it which shows the inclusions of each transaction (the commitment is also sent to all nodes) as well as proofs of these inclusions, i.e., the ${\pi_i}$'s. The committee members verify the correctness of the commitment, and then send a respective ``yes'' or ``no'' vote to all nodes. If a node receives a majority of ``yes'' votes from the committee, it generates a partial signature (see Section~\ref{section:background}) on the commitment from the leader. Finally, a committee member which receives partial signatures from a majority of the nodes combines them into a threshold signature, proving the nodes' confidence in the commitment. Alongside the commitment proof from the leader, the threshold signature constitutes the proof of inclusion that is readily available to the client to verify.

Upon termination, nodes will either output ``accept'' or ``reject.'' If a node outputs ``reject,'' it should behave as if that round of the coded blockchain protocol failed (i.e. not add the coded block parts it received to its chain) and a new round should begin. On output ``accept,'' the node should behave as if the round succeeded.


To combat a malicious leader, the committee uses coded computation of a \textit{polynomial hash function} to receive hashes of the uncoded parts of the block from all nodes, hence making it effectively impossible for a malicious leader to send a different block to the committee and to the nodes. The detailed steps of the protocol are as follows.


\noindent\textit{Step 1: Random committee selection.} After the leader sends out all the coded parts, we re-purpose the leader election mechanism and use it repeatedly to randomly determine a committee of~$\lambda$ members, all of whom are known to all nodes, where~$\lambda=O(1)$ is a security parameter; a higher $\lambda$ will result in higher security at the cost of higher communication complexity, and vice versa. 
 Alternatively, committee selection can be made using Algorand's sortitioning method, and the details are given in~\cite{Algorand}.
 \begin{remark}
 The following Steps 2-4 should occur simultaneously. If the leader has multiple pending tasks, it is up to their discretion as to the order they complete them in.
 \end{remark}
        
 \noindent\textit{Step 2: Block commitment.} The leader commits to $B$ via $C = COM(B)$, creates a proof~$\pi_i$ via $PROVE(C, x_i, i)$ for each transaction~$x_i$, and holds on to those for later. The leader will then broadcast the commitment~$C$ to all the nodes using Byzantine broadcast.
 
 \noindent\textit{Step 3: Uncoded reveal.} The leader shall send the entire uncoded block to each of the~$\lambda$ nodes in the random committee, along with all proofs $\pi_1,\ldots,\pi_{g(N)}$ of all transactions in the block.
 \begin{remark}
    We shall refer to the block the leader uses to generate the $\Tilde{P}_i$ as $B$ and the block sent in this step to the committee as $B'$.
\end{remark}
 \noindent The committee nodes shall verify that all proofs match the uncoded block sent to them, i.e. that $VER(C, x_i, i, \pi_i) = 1$ for all $i\in [g(N)]$.
 If all verifications in this step passed successfully, a committee node votes ``yes'' in the Committee Vote of Step~5 below, and otherwise votes ``no.''
 To ensure liveness, after $2\delta$ time following the completion of Step 1, 
 a committee member that has not received all proofs from the leader or the commitment~$C$, shall assume the leader has not sent them everything and vote ``no'' in the Committee Vote of Step~5 below.

 \begin{remark}
     It may be the case that computation of the proofs and commitment takes too long for the leader to be able to convey them to the nodes within $2\delta$ time after Step 1. In this case the $2\delta$ should simply be adjusted to some suitable agreed upon constant that is larger than $2\delta$ and allows enough time for the computations
 \end{remark}
 
 \noindent\textit{Step 4: Coded consistency check.} 
 Each node~$i$ sends $HASH(\Tilde{P_i})$ to each committee member, and then upon receiving $N-f$ hashes, each committee member uses coded computation to decode this polynomial and recover $HASH(P_i)$ for each~$i$. Therefore, committee nodes can make sure that the $P_i$'s they received match the encoded block that was distributed to every node, i.e. that $B'=B$; this is done by computing~$HASH(P_i)$ locally and making sure it is equal to the results that were decoded from the received~$HASH(\Tilde{P}_i)$. In particular, Step~3 ensured that~$C$ is a commitment to~$B'$, therefore this step ensures that~$C$ is a commitment to~$B$ by verifying that~$B'=B$. Hence, we again have the committee nodes vote ``no'' in Step 5 if any of the verifications in this step fail and ``yes'' otherwise.

\noindent\textit{Step 5: Committee vote.} 
Each committee member votes on whether~$C$ is a commitment to~$B$, i.e. that $VER(C,x_i,i,\pi_i)=1$ for every~$i$ (Step~3) and that~$B=B'$ (Step~4). 
To do this, each committee member~$n_c$ votes ``yes'' by broadcasting $\sigma_{sk_c}(1 \Vert C)$ and ``no'' by broadcasting $\sigma_{sk_c}(0 \Vert C)$ to all nodes using an additional Byzantine broadcast protocol in which the committee member serves as the leader. In the case the Byzantine broadcast protocol determines that the leader (i.e., the committee member which initiated the protocol) is malicious or non-responsive, each node considers their vote as a ``no.'' 


\noindent\textit{Step 6: Signature gathering.} First, nodes wait until the coded blockchain round terminates; if the round was accepted, they output ``accept'' and if rejected, they output ``reject.'' Since we do not want clients to need to know the committee for each round, we need to certify the commitment in a way that can be verified by only knowing the public keys of the nodes as a whole. To do this, upon receiving a majority (i.e., over~$\lambda/2$) of ``yes'' votes from the previous step, each node~$n_i$ will send $\sigma_{skt_i}(C  \Vert blockNum)$, their partial threshold signature over the commitment with the block number appended, to each member of the committee. If a node receives at least $\lceil \frac{\lambda}{2} \rceil$ ``no'' votes from the previous step, they shall instead output ``reject.'' 

\noindent\textit{Step 7: Proof distribution.} 
If a committee node receives~$N/2+1$ partial signatures over $C \Vert blockNum$ from a set of nodes~$S$, it then
uses those signatures to create $\sigma_{final} = \sigma_{comb}(C  \Vert blockNum, S)$. Each committee member shall then send out (individually) each $\pi_i$ to $SEN(x_i)$ and $REC(x_i)$ as well as both $C$ and $\sigma_{final}$.

\noindent\textit{Step 8 (optional): Transaction confirmation verification.} By this point, a client who sent or received transaction $x_i$ has received $\pi_i$, $C$, and $\sigma_{final}$. The client should then make sure that $VER(C, x_i, i, \pi_i)=1$ and $V_t(C, \sigma_{final}, pkt)=1$. Note that a client may receive multiple sets of $\pi_i$, $C$, and $\sigma_{final}$ since malicious nodes may be trying to confuse it and because each honest committee node sends a set to it. In this case, the client should only consider one set from each node for each transaction the client sent to prevent a malicious node from trying to force the client to do a large amount of excess work and should consider the transaction as included in the block if any of the considered sets passes the two verifications.

        
\begin{remark}
As an alternative to clients needing to be online to receive the proofs, we anticipate archival nodes would become common. These nodes would receive all of the proofs and store them, so that clients could query them at any point. Notice that whether or not the archival node is malicious, they have no way of corrupting the proofs. As long as the client verifies the proofs, archival nodes do not harm the integrity of the protocol.
\end{remark}

\section{Analysis}\label{section:analysis}
In this section we analyze the guarantees and communication overhead of the scheme described in the previous section. In particular, it is shown that the communication complexity of the protocol is~$O(|B|+H+N\operatorname{polylog}(N))$, and that the protocol guarantees liveness, safety, and security with high probability.
\begin{remark}
We use the bit complexity of HotStuff as an upper bound. Since it is structured as a PBFT, we are interested in the communication complexity of a \textit{view change}. Per~\cite{abraham2018hot}, the bit complexity of each view change is $O(N\operatorname{polylog}(N))$, hence we consider the bit complexity of our broadcast mechanism as $O(N\operatorname{polylog}(N))$.
\end{remark}
\begin{claim}\label{claim:communicationComplexity}
    If the original coded chain protocol has (with high probability) communication complexity $O(H)$ per block verification, the above protocol takes $O(|B|+H+N\operatorname{polylog}(N))$.
\end{claim}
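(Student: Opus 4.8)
The plan is to account for the bit complexity contributed by each of the eight steps separately, and then sum. First I would fix notation: let $\lambda=O(1)$ be the committee size, $N$ the number of nodes, $|B|$ the block size in bits, $g(N)$ the number of transactions, and recall that each Byzantine broadcast of an $m$-bit message costs $O(N\operatorname{polylog}(N)+Nm)$ (or $O(N\operatorname{polylog}(N))$ for a constant-size message), using the HotStuff view-change bound from the preceding remark. I would also recall from the preliminaries that $COM$, $PROVE$, threshold signatures, and ordinary signatures all have constant output size, and that $HASH$ has constant degree $d_{HASH}$.

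Next I would walk through the steps. Step 1 (committee selection) re-uses the leader election mechanism $O(\lambda)=O(1)$ times, each invocation costing at most the cost of one broadcast, so $O(N\operatorname{polylog}(N))$. Step 2 (block commitment): $C$ has constant size, so broadcasting it costs $O(N\operatorname{polylog}(N))$. Step 3 (uncoded reveal): the leader sends $B'$ and all $g(N)$ proofs $\pi_i$ to each of the $\lambda$ committee members; since each $\pi_i$ is constant-size and $g(N)=O(|B|)$ (each transaction is at least one bit), this is $O(\lambda(|B|+g(N)))=O(|B|)$. Step 4 (coded consistency check): each of the $N$ nodes sends one constant-size hash $HASH(\tilde P_i)$ to each of the $\lambda$ committee members, for $O(\lambda N)=O(N)$ bits total; the local decoding is valid because the parameters satisfy $(K-1)d_{HASH}+2f+1\le N-f$, which follows from the standing assumption $N\ge 3f+1+(K-1)d_{HASH}$. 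Step 5 (committee vote): each of the $\lambda$ committee members runs one Byzantine broadcast of a constant-size signed vote, giving $O(\lambda\cdot N\operatorname{polylog}(N))=O(N\operatorname{polylog}(N))$. Step 6 (signature gathering): each of the $N$ nodes sends one constant-size partial threshold signature to each of the $\lambda$ committee members, $O(\lambda N)=O(N)$ bits; waiting for the coded round to terminate contributes nothing beyond the assumed $O(H)$. Step 7 (proof distribution): each committee member sends $\pi_i$, $C$, $\sigma_{final}$ to $SEN(x_i)$ and $REC(x_i)$ for all $i$; all of these are constant-size, so the cost is $O(\lambda g(N))=O(|B|)$. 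Step 8 is local computation by the client and contributes no communication. Finally, the underlying coded blockchain round itself contributes $O(H)$ by hypothesis. Summing over all steps yields $O(|B|+H+N\operatorname{polylog}(N))$.

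The only genuinely subtle points, which I would make sure to state carefully rather than gloss over, are two. First, the reduction $g(N)=O(|B|)$ must be justified — it holds because every transaction occupies at least one bit of the block, so the number of transactions is at most $|B|$; this is what collapses the "$\lambda g(N)$" terms in Steps 3 and 7 into $O(|B|)$. Second, one must check that the $\lambda$ factors multiplying the per-broadcast $N\operatorname{polylog}(N)$ cost in Steps 1 and 5 are genuinely absorbed, which is immediate since $\lambda=O(1)$ by assumption, but it is worth noting that if $\lambda$ were allowed to grow the bound would change. I would also remark that the "with high probability" qualifier is inherited directly from the hypothesis on the original protocol's complexity (and from the broadcast primitive if its bound is probabilistic), so no new probabilistic analysis is needed here.

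I do not expect any real obstacle in this claim; it is a straightforward step-by-step accounting. If anything, the main thing to be careful about is not double-counting — in particular making sure that the $O(H)$ term already includes everything the original coded blockchain round does (including any decoding the nodes or leader perform during verification), so that our added steps are accounted for only once, and that "simultaneous" execution of Steps 2--4 (per the earlier remark) does not change the bit count, only the wall-clock time.
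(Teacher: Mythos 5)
Your proposal is correct and follows essentially the same step-by-step accounting as the paper's own proof, arriving at the identical per-step bounds ($O(N\operatorname{polylog}(N))$ for the broadcasts in Steps 2 and 5, $O(|B|)$ for Steps 3 and 7 via $g(N)=O(|B|)$, and $O(N)$ for Steps 4 and 6). The only cosmetic difference is in Step 1, where the paper charges the committee selection to $O(H)$ as a constant number of invocations of the base chain's election mechanism rather than to the broadcast cost; both choices are absorbed into the final bound.
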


\begin{proof}
    We consider the protocol step by step.
    \begin{itemize}
        \item \textit{Step 1: Random committee selection}. In this step, we use the random election system of the base chain only a constant number of times, hence the complexity of this step must be at most a constant multiple of the complexity of the base chain, i.e., $O(H(N))$.
        \item \textit{Step 2: Block commitment}. The only messages in this step are from the broadcast of the commitment, so we get $O(N\operatorname{polylog}(N))$ complexity.
        \item \textit{Step 3: Uncoded reveal}. Here the entire block and the corresponding proofs of each entry are sent by the leader to the committee. Since the committee size is constant and the individual proofs are each of constant size, the block size dominates the complexity, resulting in $O(|B|)$ complexity for this step.
        \item \textit{Step 4: Coded consistency check}. Here, each node sends a single hash to a constant number of nodes, hence we only have $O(N)$ complexity from this step.
        \item \textit{Step 5: Committee vote}. This is a constant number of constant-sized broadcasts, so we have complexity $O(N\operatorname{polylog}(N))$.
        \item \textit{Step 6: Signature gathering}. Each node sends only a constant number of signatures, leading to $O(N)$ complexity overall.
        \item \textit{Step 7: Proof distribution}. This step consists of each committee node sending at most $g(N)$ triplets of a threshold signature $\sigma_{final}$, the commitment $C$, and a single entry's proof $\pi$. Since all three parts of the triplet are constant sized, we get complexity $O(g(N))=O(|B|)$.
    \end{itemize}
    We therefore overall have bit complexity $O(|B|+H(N)+N\operatorname{polylog}(N))$
\end{proof}

\begin{remark}\label{remark:probbound}
    Since ~$N\ge3f+1+(K-1)d_{HASH}$, it trivially follows that over $\frac{2}{3}$ of the nodes are honest, so the probability of a randomly selected committee of size $\lambda$ being majority Byzantine is bounded by
    \begin{align}\label{equation:maliciouscommittee}
        \sum_{i=\lfloor \frac{\lambda}{2} + 1\rfloor}^{\lambda} {\lambda \choose i} \cdot \frac{1}{3}^i\cdot\frac{2}{3}^{\lambda-i}
    \end{align}
    For instance, with $\lambda=3000$, Eq.~\eqref{equation:maliciouscommittee} evaluates to less than $2^{-256}$. Therefore, in proving the following claims, we assume this event not to occur.
\end{remark}

\begin{claim}[Safety] \label{claim:safety}
Upon termination, either all honest nodes accept or all honest nodes reject assuming the coded blockchain possesses safety per round.
\end{claim}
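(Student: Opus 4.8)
The plan is to trace how the ``accept''/``reject'' decision is reached in Step 6 and show it is forced to be the same for every honest node, given that the underlying coded blockchain has per-round safety. First I would recall from Remark~\ref{remark:probbound} that we may assume the elected committee is majority honest, so at least $\lceil\lambda/2\rceil$ committee members are honest. The key observation is that Step~6 distinguishes three cases for each honest node $n_i$: (a) $n_i$ outputs ``reject'' outright if it receives $\lceil\lambda/2\rceil$ ``no'' votes; (b) otherwise, $n_i$ waits for the coded blockchain round to terminate and copies its ``accept''/``reject'' output; and it only proceeds to emit partial threshold signatures in the sub-case where it has received a majority of ``yes'' votes. So I must argue two things: first, that all honest nodes fall into the same case, and second, that within the ``follow the coded round'' case they agree.

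The second point is immediate: by the assumed per-round safety of the coded blockchain, all honest nodes see the same accept/reject outcome of that round, so if every honest node is in case (b) they agree. For the first point, I would use the \emph{agreement} property of the Byzantine broadcast used for committee votes in Step~5. Each committee member's vote is disseminated by a Byzantine broadcast in which that member is the leader; by agreement, if any honest node accepts a given value (including the ``malicious/non-responsive leader $\Rightarrow$ treat as no'' default), then all honest nodes accept that same value. Hence all honest nodes see an \emph{identical} multiset of $\lambda$ committee votes. Consequently the count of ``no'' votes is the same number $m$ at every honest node, so either every honest node has $m\ge\lceil\lambda/2\rceil$ and all reject via case (a), or every honest node has $m<\lceil\lambda/2\rceil$ and hence a strict majority of ``yes'' votes, so all honest nodes are in case (b) and agree by coded-round safety. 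This exhausts the cases and proves the claim.

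The main obstacle, and the place I would be most careful, is pinning down exactly what ``termination'' of our protocol means and making sure the case split in Step~6 is genuinely exhaustive and mutually consistent across honest nodes --- in particular, that there is no honest node which is still waiting for votes when another has already decided, i.e. that by the time the protocol terminates every honest node has received the full, agreed-upon set of committee votes. This is where \emph{liveness} (not just termination) of the Byzantine broadcast is needed: after GST, all honest nodes receive each committee broadcast within bounded time, so the vote multiset is not only eventually identical but identical by the time any honest node terminates. I would also note the edge subtlety that a node in case (b) that has \emph{not} seen a majority of ``yes'' votes (because $m$ ``no'' votes with $m<\lceil\lambda/2\rceil$ still need not leave a strict majority of ``yes'' if some committee members are silent) --- but since every honest node sees the same votes and silent-leader broadcasts are counted as ``no'' by agreement, the ``yes'' and ``no'' counts partition $\lambda$ identically at all honest nodes, so $m<\lceil\lambda/2\rceil$ does force a ``yes'' majority uniformly. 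With that wrinkle handled, the argument closes.
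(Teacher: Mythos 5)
Your proposal is correct and follows essentially the same route as the paper: agreement of the Step~5 Byzantine broadcasts forces all honest nodes to see identical committee votes, and the assumed per-round safety of the coded blockchain forces agreement on the underlying round's outcome, so the Step~6 decision rule yields the same output everywhere. The paper states this in two lines; your version merely adds the (valid and worthwhile) bookkeeping about the exhaustive case split and the timing of vote delivery.
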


\begin{proof}
    This a direct result from Step~5 and Step~6 since all nodes receive the same messages from Step~5 due to the Byzantine broadcasting, and hence in Step~6 when nodes decide to accept or reject, they must be in agreement, assuming they all agree on the outcome of the coded blockchain protocol which is guaranteed by its assumed safety. 
\end{proof}

\begin{claim}[Security]\label{claim:security} Upon termination with ``accept'' as the result, all senders of transactions have received w.h.p a proof that their transaction is in the block.
\end{claim}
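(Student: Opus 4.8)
The plan is to trace an arbitrary transaction $x_i$ through the protocol under the assumption (justified in Remark~\ref{remark:probbound}) that the elected committee is majority-honest, and under the claim's hypothesis that the protocol terminates with ``accept.'' First I would observe that ``accept'' is output (Step~6) only if a node sees a majority of ``yes'' votes from the committee; combined with Safety (Claim~\ref{claim:safety}), every honest node sees the same votes, so there is genuinely a committee majority voting ``yes.'' Since the committee is majority-honest, at least one honest committee member voted ``yes,'' which by Steps~3--5 means that for \emph{every} $i$ it verified $VER(C, x_i, i, \pi_i) = 1$ against the commitment $C$ that was Byzantine-broadcast to all nodes, and that $B' = B$ via the coded consistency check of Step~4. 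So $C = COM(B)$ and the honest member holds a valid $\pi_i$ for each $i$.

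Next I would argue that the threshold signature actually gets formed and delivered. Because there is a committee majority of ``yes'' votes (seen identically by all honest nodes via Byzantine broadcast in Step~5), every honest node sends its partial threshold signature $\sigma_{skt_i}(C \Vert blockNum)$ to every committee member in Step~6. There are at least $N - f \geq N/2 + 1$ honest nodes (indeed $N \geq 3f+1$ gives more than $2N/3$ honest), so each honest committee member collects at least $N/2+1$ partial signatures and forms $\sigma_{final} = \sigma_{comb}(C \Vert blockNum, S)$ in Step~7. That honest committee member then sends $(\pi_i, C, \sigma_{final})$ to $SEN(x_i)$ and $REC(x_i)$. I would note the liveness/timeout mechanism of Step~3 and the broadcast liveness property ensure all of this completes after GST, so the sender does receive the triple.

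Finally I would close by checking that what the sender receives passes the two verifications of Step~8, i.e. that it is a valid proof in the sense of the Problem Definition: $VER(C, x_i, i, \pi_i) = 1$ holds because the honest committee member checked exactly this before voting ``yes'' in Step~5 (and it sent the same $C, \pi_i$ it verified), and $V_t(C, \sigma_{final}, pkt) = 1$ holds because $\sigma_{final}$ was combined from $N/2+1 = t+1$ valid partial signatures. Hence the sender holds $C = COM(B)$, a valid inclusion proof $\pi_i$, and a threshold signature over $C$ from a majority of nodes, which is exactly the security guarantee. I expect the main obstacle to be pinning down the interaction between the ``accept'' decision and the existence of an honest ``yes''-voter: one must be careful that a malicious leader cannot cause ``accept'' to be output while having shown the committee a block inconsistent with the broadcast $C$ or with the coded parts $\Tilde{P}_i$ — this is precisely what Step~4's coded consistency check rules out, and articulating why $N \ge 3f+1+(K-1)d_{HASH}$ makes that decoding correct (so no honest committee member is fooled into a spurious ``yes'') is the crux of the argument. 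A secondary subtlety worth a sentence is that the claim only promises the guarantee ``w.h.p.,'' absorbing both the committee-composition event of Remark~\ref{remark:probbound} and the negligible forgery probabilities of the signature primitives.
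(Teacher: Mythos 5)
Your proposal is correct and follows essentially the same route as the paper: ``accept'' forces a committee majority of ``yes'' votes, the committee is w.h.p.\ not majority Byzantine so some honest member voted ``yes,'' that member's Step~3 checks plus the correctly-decoded coded consistency check of Step~4 (using $N\ge3f+1+(K-1)d_{HASH}$ and the $N-f$ received hashes) tie the proofs to the actual block $B$, and a single honest committee member suffices to form $\sigma_{final}$ and distribute the triples. The only divergence is that you additionally argue delivery to the client, whereas the paper explicitly declines to model client--node communication and only guarantees that a verifiable proof is generated and obtainable; this is a matter of scope, not of substance.
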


\begin{proof}
For greater generality, we avoid making any assumptions regarding the nature of the communication between clients and nodes, and leave that as an implementation detail. Instead, we guarantee that a verifiable proof can be generated, and assume that a client can obtain it at will from any node once the protocol has ended. Specific mechanisms for the transmission of the proof to the client are beyond the scope of this paper, and yet these can rely on incentive mechanisms between clients and nodes.

Regardless, if the nodes accept the block, it follows that over half of the committee signed on it.

By Remark~\ref{remark:probbound}, the committee is not majority Byzantine, so at least one honest node received valid proofs relative to $C$ and $B'$ and found the hashes they decoded to be consistent with the values in the commitment. Recall that in order to correctly decode $R$ coded results, we require $(K-1)\deg(p)+2f+1\leq R$, but since $N\ge3f+1+(K-1)d_{HASH}$ and since we have at least $N-f\geq (K-1)\deg(p)+2f+1$ hash results, decoding is guaranteed to be correct. Hence, the proofs must also be valid relative to $B$. Note that the ``Proof Distribution'' step only relies on a single honest node from the committee who has the proper information. Therefore, our security property is guaranteed.
\end{proof}

\begin{claim}[Liveness] After GST, the protocol terminates within a bounded amount of time with very high probability. Additionally, the number of iterations of the protocol for all honest nodes to output "accept" is bounded assuming the coded blockchain protocol guarantees liveness.
\end{claim}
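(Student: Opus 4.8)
The plan is to establish the two parts of the claim separately. For the first part (bounded termination after GST), I would walk through the steps of the protocol in order and argue that each one completes within a bounded number of~$\delta$-rounds once GST has passed, invoking the timeouts baked into the protocol description. Step~1 invokes the leader-election mechanism a constant number of times; assuming the underlying mechanism terminates in bounded time after GST, so does Step~1. Steps~2 and~5 are Byzantine broadcasts, which by the assumed liveness property of the broadcast primitive (HotStuff) complete within bounded time after GST. Step~3 and Step~4 are explicitly governed by the $2\delta$ timeout (or the larger agreed-upon constant in the Remark) following Step~1, so a committee member that has not received the required data by then simply votes ``no''—hence these steps terminate unconditionally. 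Steps~6 and~7 require waiting for the underlying coded-blockchain round to terminate (bounded by that protocol's assumed liveness) and then gathering a fixed number of partial signatures, each transmitted in bounded time after GST. Chaining these bounds gives termination within a bounded amount of time. The ``very high probability'' qualifier enters only through Remark~\ref{remark:probbound}: if the committee were majority-Byzantine it could stall, but that event has probability at most the expression in Eq.~\eqref{equation:maliciouscommittee}, which we assume does not occur.

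For the second part (bounded number of iterations until all honest nodes output ``accept''), I would argue as follows. An iteration fails to yield ``accept'' only if either (a)~the coded-blockchain round itself fails, or (b)~a node outputs ``reject'' in Step~6 because it received~$\lceil\lambda/2\rceil$ ``no'' votes, or (c)~not enough partial signatures are gathered. I would show that when the leader is honest, GST has passed, and the committee is not majority-Byzantine (Remark~\ref{remark:probbound}), none of these failure modes occurs: an honest leader sends the correct~$B' = B$ with valid proofs~$\pi_i$, so every honest committee member's verifications in Steps~3 and~4 pass and it votes ``yes''; since a majority of the committee is honest, every honest node sees a majority of ``yes'' votes, so no honest node rejects in Step~6 and every honest node sends its partial threshold signature; since over~$N/2+1$ nodes are honest (as~$N \ge 3f+1$), a committee member collects enough signatures to form~$\sigma_{final}$ in Step~7. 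Combined with the coded blockchain's own liveness guarantee (which ensures its round succeeds under an honest leader after GST), the whole iteration yields ``accept''. Finally, by the repeatability of the leader-election mechanism, an honest leader is elected within a bounded (in expectation, or w.h.p.) number of iterations, and once that happens—together with GST having passed and a good committee—the iteration succeeds; hence the number of iterations is bounded.

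The main obstacle I anticipate is the second part, specifically the interaction with the \emph{underlying} coded-blockchain protocol's liveness and the committee randomness. We must be careful that the events ``honest leader,'' ``good committee,'' and ``coded round succeeds'' can be made to co-occur: the committee is re-drawn each iteration using the same election mechanism, so its goodness is independent across iterations and does not degrade the bound, but one must check that conditioning on an honest leader does not adversely correlate with committee composition (it does not, since corruption is non-adaptive and fixed before the round). A secondary subtlety is bookkeeping the timeouts: the $2\delta$ window in Step~3 is measured from the completion of Step~1, so I would make explicit that Step~1's completion time is itself bounded after GST, and that the Remark's enlargement of~$2\delta$ to a larger constant does not break the bounded-time conclusion. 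Everything else is routine composition of the assumed liveness properties of the broadcast primitive, the leader-election mechanism, and the base coded-blockchain protocol.
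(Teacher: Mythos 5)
Your proposal is correct and follows essentially the same route as the paper's proof: a step-by-step walk through the protocol, invoking the broadcast primitive's post-GST liveness, the $2\delta$ timeouts that let honest committee members vote ``no'' on a stalling leader, the w.h.p.\ honest-majority committee from Remark~\ref{remark:probbound}, and repeated leader re-election to eventually reach an honest leader. If anything, you are more explicit than the paper on the second clause (showing that an honest leader plus a good committee after GST actually yields ``accept''), which the paper compresses into a single sentence about re-electing until an honest leader is found.
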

\begin{proof}
According to the properties of our broadcast mechanism, we are guaranteed that each broadcast terminates within some finite amount of time after GST. While there are spots where the leader can stall the protocol prior to GST, after GST we simply reelect the leader and within a bounded number of iterations get an honest leader. 

We begin by using the random election mechanism from the core coded chain protocol, which we assumed to have liveness. The leader then has the option to stall in Steps~2 and~3, but because honest committee nodes vote no in step 5 if they do not receive information in a timely manner in these steps, we will simply end up electing a new leader within a finite time.

Then, we have liveness over Steps~4 and~5, since we are guaranteed that the committee nodes receive at least $N-f$ hashes and via Remark~\ref{remark:probbound} the committee is, with high probability, not majority Byzantine, so at least half of the committee sends their votes out without stalling. This ensures that each node receives either a majority of ``yes'' votes or $\lceil \frac{\lambda}{2} \rceil$ ``no'' votes from the committee within some bounded amount of time, and therefore they can send their partial signatures promptly.

The final step deals with interactions with clients, which is optional, and hence is not included under the liveness property.
However, for any clients connected to the network within a finite amount of time after the final step ends, we do have liveness. Therefore, liveness is guaranteed in every step, and our entire protocol guarantees liveness after GST. 
\end{proof}


\section{Discussion}\label{section:discussion}
We point out several additional benefits of our framework in resolving other potential issues in coded blockchain systems. For instance, one approach to blockchains is an account-based model, in which an account is stored for each client, and an account-balance check is conducted whenever required. 

In account-based coded blockchain, one would have to store the accounts in a coded manner, potentially update them each round, and then employ some polynomial to verify sufficient balance. However, this only verifies that a client has sufficient balance for each individual transaction, 
rather than for all of them. In a coded scenario, this poses a significant challenge to overcome. 

Our protocol could easily solve this problem, intuitively, by gaining access to the uncoded block. We would require the leader to additionally send out a coded version of a restructured block where all transactions from a given sender are combined into one transaction with arbitrary receiver. This would allow the base chain protocol to verify that each sender has the necessary collective balance, while our protocol would ensure the coded restructured block was consistent with the original block.

An interesting direction for future research is the dual problem of notifying clients whose transactions were \textit{not} included in the block. This would require a formal model of how the clients interact with the chain and propose transactions, and would be somewhat less general. 

\printbibliography
\end{document}